\documentclass[sigconf,screen,nonacm]{acmart}
\usepackage{algorithm}
\usepackage{algpseudocode}
\usepackage{booktabs}
\usepackage{graphicx}
\usepackage{url}
\usepackage{tabularx}

\newtheorem{claim}{Claim}


\begin{document}


\title{Predictable LLM Serving on GPU Clusters}

\author{Erfan Darzi}
\affiliation{%
  \institution{Harvard University, MIT}
  \city{Cambridge}
  \state{MA}
  \country{USA}
}

\author{Shreeanant Bharadwaj}
\affiliation{%
  \institution{Northeastern University}
  \city{Boston}
  \state{MA}
  \country{USA}
}

\author{Sree Bhargavi Balija}
\affiliation{%
  \institution{University of California, San Diego}
  \city{La Jolla}
  \state{CA}
  \country{USA}
}


\keywords{GPU multi-tenancy, LLM serving, vLLM, TTFT, cluster scheduling, A100, SLO compliance, QoS}

\begin{abstract}
Latency-sensitive inference on shared A100 clusters often suffers noisy-neighbor interference on the PCIe fabric, inflating tail latency and SLO violations. We present a fabric-agnostic, VM-deployable host-level controller that combines dynamic Multi-Instance GPU (MIG) reconfiguration, PCIe-aware placement, and lightweight guardrails (MPS quotas, cgroup I/O). It samples per-tenant tails and system signals, uses topology hints to avoid PCIe hot spots, and gates actions with dwell/cool-down to avoid thrash. On a single host and a 2-node (16-GPU) cluster, SLO miss-rate is reduced by \(\approx\)32\% (\(\approx\)1.5×) and p99 latency improves \(\approx\)15\% with \(\leq\)5\% throughput cost versus static MIG and naive placement; ablations show MIG and placement contribute comparably. We also evaluate LLM serving with vLLM on OLMo 2 7B Instruct: TTFT p99 improves \(\approx\)10--15\% at \(\leq\)5\% cost without changing the controller. 
\end{abstract}
\maketitle

\section{Introduction}

For latency-sensitive GPU inference services, controlling tail latency is paramount. On shared hosts, the "noisy-neighbor" problem is a primary cause of unpredictable performance, where co-located tenants---such as batch training jobs, data loaders, or background I/O-intensive tasks---create resource contention that leads to significant jitter in inference latency. This jitter frequently causes violations of Service Level Objectives (SLOs), exhausting error budgets and degrading service quality. A key source of this interference is contention for shared Peripheral Component Interconnect Express (PCIe) bandwidth, which becomes a bottleneck during intensive data transfers between the host and GPU~\cite{tang2025pcie, li2019priority}. Our scope includes LLM serving with vLLM; we additionally track time-to-first-token (TTFT) tails.
\begin{figure*}[t]
    \centering
\includegraphics[width=0.99\textwidth]{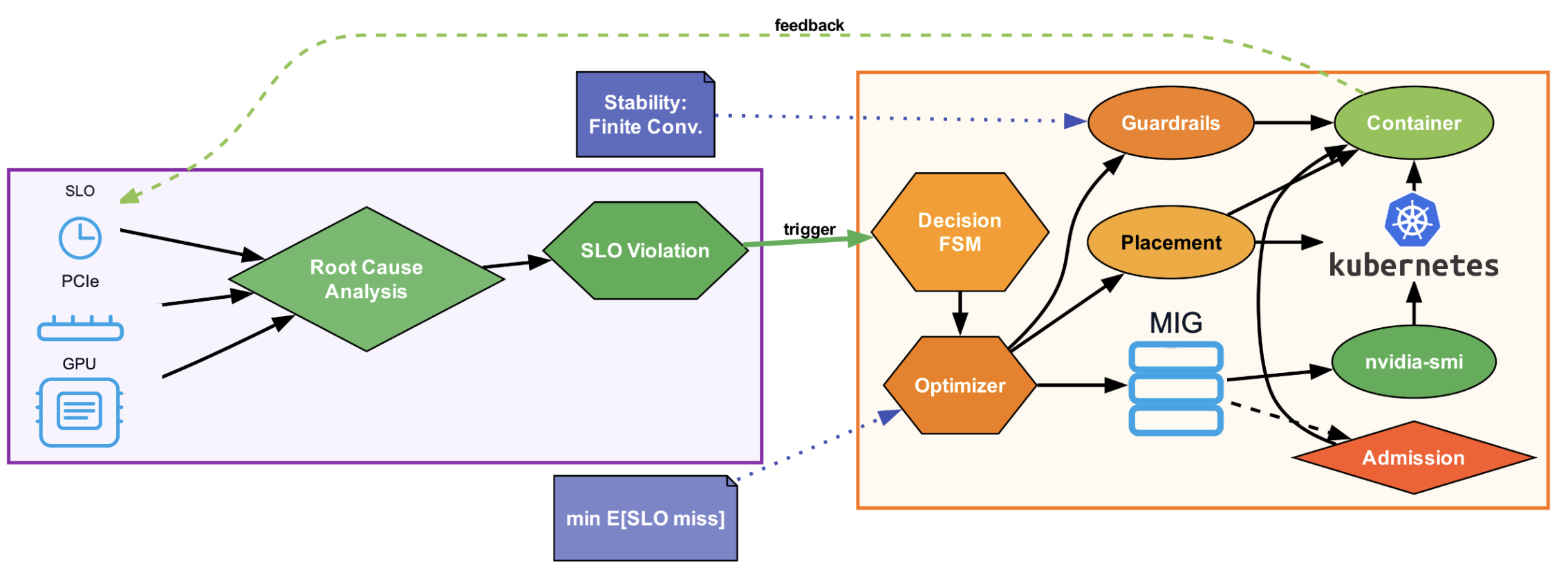}
    \caption{System architecture of the multi-tenancy controller. The monitoring domain detects SLO violations and performs root-cause analysis; the decision FSM and optimizer select among dynamic MIG reconfiguration, PCIe-aware placement, and lightweight guardrails; the execution path applies changes via \texttt{nvidia-smi} and runtime controls (MIG UUID selection, process pinning, cgroups), with a feedback loop to latency tracking.}
    \label{fig:mig_controller}
\end{figure*}
This challenge is particularly acute in multi-tenant environments with limited visibility into the underlying hardware topology, including PCIe lane allocation and Non-Uniform Memory Access (NUMA) domains~\cite{jeon2019analysis}. In practice, MIG provides hard isolation for compute and HBM but not for the shared PCIe path, so co-running PCIe-bound tenants can still throttle one another across MIG instances~\cite{tang2025pcie}. Meanwhile, pure MPS-based sharing improves utilization but offers weaker isolation for memory and I/O~\cite{wu2023transparent}. Consequently, there is a strong need for host-only solutions that mitigate interference and guarantee performance while operating within the constraints of a renter's VM.

We target this setting with a controller that integrates three simple levers: (i) dynamic MIG to increase or relax isolation as conditions change, (ii) PCIe-aware placement to avoid topologically hot paths, and (iii) lightweight guardrails (MPS quotas and cgroup I/O throttles) to contain bursty neighbors. The controller is deliberately conservative: it only escalates isolation when p99/p999 tails persist above threshold, enforces dwell/cool-down to avoid thrashing, and returns resources when the system is stable. Our approach complements recent work showing the benefits of dynamic MIG reconfiguration~\cite{wang2024migrator, li2022miso} and PCIe-aware scheduling~\cite{tang2025pcie}, but is deployable by renters without fabric privileges.

Our contributions are:
\begin{enumerate}
    \item A multi-tenancy controller for NVIDIA A100 GPUs that integrates dynamic MIG and PCIe-aware placement.
    \item First SLO-safe, multi-tenant control demo on a multi-node (16-GPU) cloud cluster without fabric privileges.
\end{enumerate}

\section{Method}
We implement a controller that minimizes tail latency for GPU workloads. The controller operates in a continuous loop, sampling per-tenant tail latency and system-level signals such as PCIe bandwidth and I/O activity. If a tenant's latency exceeds a predefined SLO threshold, the controller acts from a three-tiered decision space to mitigate interference. It can (1) dynamically reconfigure MIG profiles to provide stronger hardware isolation, (2) perform PCIe-aware placement to move tenants away from hardware hot spots, or (3) apply lightweight guardrails such as MPS quotas and I/O throttles. To prevent oscillation, the policy enforces dwell times and cool-down periods between actions. This approach allows for adaptive, fine-grained resource management to ensure SLO compliance under dynamic load.

\paragraph{Signals and sampling.} Every $\Delta$ seconds (1--5s), the controller collects: (i) per-tenant p95/p99/p999 latencies, including TTFT for autoregressive serving, and SLO miss-rate; (ii) NVML metrics (SM utilization, memory bandwidth); (iii) PCIe counters (bytes/s; retries if available) and host block I/O; and optionally, (iv) NIC and IRQ activity. Signals are smoothed with exponential moving averages and hysteresis to reduce spurious triggers.

\paragraph{Objective and constraints.} The primary objective is to minimize SLO miss-rate and p99/p999 latency for the latency-sensitive tenant subject to a throughput budget (\(\leq\)5\% degradation). Actions must be infrequent enough to keep reconfiguration overhead small (\(\leq\)30s per change on A100), so we impose a minimum dwell time between changes and a cool-down period after recovery.

\subsection{Signals \& Objective}
Our controller's decision-making process is driven by a continuous stream of performance signals, polled at a configurable interval, $\Delta$ (typically 1--5 seconds), using standard monitoring tools like the NVIDIA Management Library (NVML). The primary signal is the per-tenant 99th percentile (p99) latency, which directly measures adherence to its SLO. To diagnose the root cause of SLO violations, the controller also ingests a set of secondary signals. These include Peripheral Component Interconnect Express (PCIe) counters (bytes/s, retries) to detect bandwidth contention, NVML metrics for Streaming Multiprocessor (SM) and memory bandwidth utilization to identify compute or memory pressure, and host-level I/O statistics to correlate with storage-intensive noisy neighbors. Optionally, Network Interface Controller (NIC) and Interrupt Request (IRQ) statistics can be used to identify network or host-level interference. The controller's objective is to minimize the SLO miss-rate and p99 latency for the primary tenant, with a throughput degradation constraint of 5\% or less.

\subsection{Decision Space}
To counteract the sources of interference identified by its signals, the controller operates within a well-defined decision space encompassing three categories of actions. First, it can perform \textbf{Dynamic MIG Reconfiguration}~\cite{wang2024migrator, li2022miso}, changing the size of a tenant's Multi-Instance GPU (MIG) profile, such as `1g.10gb` to `2g.20gb` on an A100. This provides stronger hardware isolation for compute and memory resources. Second, it can perform \textbf{PCIe-Aware Placement}, migrating a tenant between different MIG instances on the same GPU. This action is guided by topology-aware heuristics that seek to map tenants to MIG instances that do not share a busy PCIe root switch or Non-Uniform Memory Access (NUMA) node with a high-traffic neighbor~\cite{tang2025pcie}. Finally, the controller can apply lightweight, host-level \textbf{Guardrails}. These include adjusting the Multi-Process Service (MPS) active thread percentage to cap a tenant's concurrency and applying cgroup I/O throttles using `io.max` to limit the disk bandwidth of a noisy background job~\cite{wu2023transparent}.

\subsubsection{Topology-aware placement heuristic}
We query GPU and PCIe topology via DCGM/NVML and host tools including `lspci` and NUMA maps to form a simple placement score for each candidate MIG instance. The score penalizes (i) sharing a PCIe root complex with a bandwidth-heavy tenant, (ii) colocating with a NUMA domain exhibiting high block I/O, and (iii) recent IRQ bursts on adjacent CPU cores. When upgrading isolation, we first attempt an intra-GPU move to the least-penalized MIG instance; only if insufficient do we enlarge the MIG slice. When relaxing isolation, we select a smaller profile whose placement score remains below a conservative threshold, preventing regressions.

\subsection{Policy}
The controller's policy is a simple, robust state machine designed to react to performance degradation while avoiding oscillation. The policy is triggered when a tenant's p99 latency exceeds a threshold, $\tau$, for $Y$ consecutive observation windows. Upon triggering, the controller first attempts to diagnose the issue using its secondary signals. If high PCIe or I/O pressure is detected, it applies a cgroup I/O throttle for a fixed duration, $Z$, on the offending background tenant. If the primary cause appears to be compute or memory contention, or if throttling does not resolve the issue, the controller proceeds to \textbf{upgrade the tenant's isolation}. This involves either increasing its MIG share (if headroom is available on the GPU) or migrating it to a less contended MIG instance. As part of this upgrade, it also pins the tenant's CPU affinity away from cores handling high IRQ loads and enforces a stricter MPS quota. Conversely, when a tenant's performance has been stable and well within its SLO for a sustained period, the controller may attempt to \textbf{relax its isolation}, moving it to a smaller MIG profile to free up resources and improve overall GPU utilization. To prevent rapid, counter-productive changes, the policy enforces a minimum dwell time between actions and a cool-down period before re-evaluating a tenant's status after a move. In cases where no safe placement can be found for a new tenant without violating the SLOs of existing tenants, an admission control mechanism will queue or reject the new workload.

\subsection{Implementation Notes}
The controller is implemented in Python and deployed as a host-level service. It uses \texttt{nvidia-smi mig} for reconfiguration and enforces placement via process pinning with \texttt{CUDA\_VISIBLE\_DEVICES} and cgroups. Root access within the VM is sufficient. We gate actions behind feature flags, log all decisions with signal snapshots for audit, and support rollback to the last-known-good configuration if post-change p99 worsens within a short validation window. CPU affinity is applied to keep the latency-sensitive tenant away from IRQ-heavy cores; MPS quotas are adjusted via \texttt{CUDA\_MPS\_ACTIVE\_THREAD\_PERCENTAGE}; and I/O throttles use cgroup \texttt{io.max} with bounded windows (tens of seconds) to reduce collateral damage.

\begin{algorithm}
\caption{Multi-Tenancy Controller}
\begin{algorithmic}[1]
\State \textbf{Inputs:} Latency stream $\mathcal{L}$, tail threshold $\tau$, persistence $Y$
\State \textbf{State:} Window $W$, config $C$, cooldown timer $T_{cd}$
\Procedure{OnObservation}{$l$}
    \State $W \gets (W \cup \{l\})$
    \State $p_{99} \gets \text{quantile}(W, 0.99)$
    \If{not at\_reconfig\_boundary() or is\_cooling\_down()} \textbf{return} \EndIf
    \If{$p_{99} > \tau$ for $Y$ consecutive windows}
        \State $C \gets \text{UpgradeIsolation}(C)$
        \State Relaunch(C); $T_{cd} \gets \text{DWELL\_TIME}$
    \ElsIf{tail\_is\_stable() and throughput\_ok()}
        \State $C \gets \text{RelaxIsolation}(C)$
        \State Relaunch(C); $T_{cd} \gets \text{DWELL\_TIME}$
    \EndIf
\EndProcedure
\end{algorithmic}
\label{alg:controller}
\end{algorithm}

\subsection{Formal Modeling}

\subsubsection{Modeling PCIe Contention and Latency Interference}
We model the PCIe fabric as a single processor-sharing (PS) server of capacity \(B\). When a set \(\mathcal{A}(t)\) of tenants is active, tenant \(i\) receives instantaneous bandwidth
\[
    b_i(t) = \min\left\{ \frac{B\,w_i}{\sum_{j\in \mathcal{A}(t)} w_j},\; g_i \right\},
\]
where \(w_i>0\) are optional weights (equal weights recover equal sharing) and \(g_i\) is an optional host-level throttle. This captures both equal/weighted sharing and explicit rate limits.

\begin{figure*}[t]
    \centering
    \includegraphics[width=0.7\textwidth]{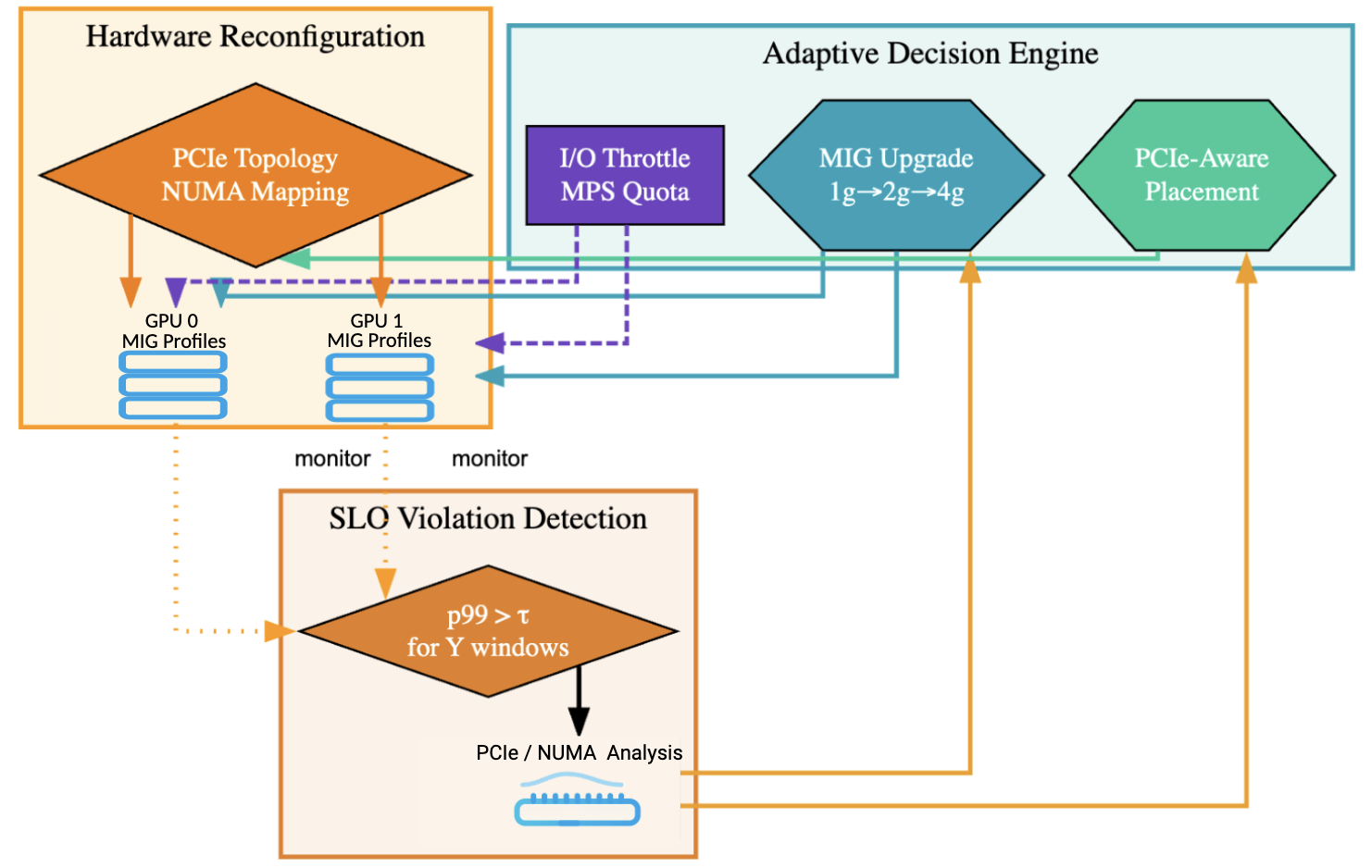}
    \caption{PCIe contention model and topology. Under a PS model with capacity \(B\), co-active tenants share bandwidth; caps \(g_i\) (when applied) prevent pathological oversubscription. Saturation inflates the transfer component of latency and contributes to heavy tails.}
    \label{fig:pcie_contention}
\end{figure*}

The latency \(L_i(t)\) for tenant \(i\)'s inference requests includes a transfer component proportional to the data size \(s_i\) divided by \(b_i(t)\), plus compute time \(c_i\):
\(L_i(t) = c_i + \frac{s_i}{b_i(t)} + \epsilon(t)\), where \(\epsilon(t)\) captures stochastic noise from queuing and scheduling. For tails, we avoid positing a parametric form. Instead, for guidance we use the classical Kingman approximation for the mean queueing delay in a G/G/1 stage upstream of compute:
\[
    \mathbb{E}[W_q] \approx \frac{\rho}{1-\rho}\,\frac{c_a^2 + c_s^2}{2}\,\mathbb{E}[S],\quad \rho = \lambda\,\mathbb{E}[S],
\]
where \(c_a\) and \(c_s\) are the coefficients of variation of inter-arrival and service times, respectively. We report empirical \(p99/p999\) in evaluation and use the bound qualitatively to explain how saturation (\(\rho \to 1\)) inflates tails.

\subsubsection{Optimization Objective for SLO Compliance}
The controller optimizes SLO compliance by minimizing the expected SLO miss-rate \(M = \Pr(L_i > \tau)\) for the primary tenant, subject to throughput constraints. Formally, define the objective as \(\min \mathbb{E}[M]\) over MIG configurations \(m_i \in \mathcal{M}\) and placements \(p_i \in \mathcal{P}\) (PCIe topology-aware slots), with guardrail parameters \(g_i\) (MPS quota, I/O throttle). The constraint ensures throughput \(T_i \geq 0.95 T_{\text{base}}\), where \(T_i = \lambda_i (1 - M)\).

To solve this, we use a greedy heuristic approximating the NP-hard allocation problem. Consider the isolation upgrade: if \(p_{99,i} > \tau\) persists, upgrade \(m_i\) to maximize \(\Delta \mu_i = \mu(m_i') - \mu(m_i)\), where \(\mu(m) \propto\) SM cores and memory in profile \(m\). Finite termination of isolation upgrades: the action/state space is finite, and each upgrade strictly increases isolation (more SMs and/or memory), so the policy performs at most \(|\mathcal{M}| - 1\) upgrades before reaching maximal isolation. We do not claim that this guarantees \(p_{99,i} \le \tau\); it only ensures that upgrade sequences terminate in finitely many steps.

\subsubsection{Theorem on Guardrail Stability}
\begin{claim}[Stability under PS and bounded throttles]
Assume (i) the PCIe fabric behaves as a single PS server of capacity \(B\); (ii) each tenant \(j\)'s effective PCIe demand is bounded by a throttle \(g_j\) when active; (iii) the aggregate offered load satisfies \(\sum_j g_j < B\); and (iv) arrivals to the latency-sensitive tenant \(i\) are stationary with \(\lambda_i < \mu_i(B)\) under these throttles. Then the queue for tenant \(i\) is stable and its latency distribution has finite moments; in particular, \(p_{99,i}\) exists and is finite.
\end{claim}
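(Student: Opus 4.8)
The plan is to reduce tenant \(i\)'s dynamics to a single-server G/G/1 queue whose stability is governed directly by assumption (iv), and then to invoke classical ergodic and moment results for such queues. First I would use assumptions (i)--(iii) to establish that tenant \(i\) always receives service at a positive, deterministically bounded-below rate. The key observation is that the throttle condition \(\sum_j g_j < B\) makes the PS server non-oversubscribed: because every active tenant's demand is capped at \(g_j\) and these caps sum to strictly less than capacity, a work-conserving PS allocation can satisfy all active tenants simultaneously. Hence the \(\min\) in the bandwidth expression \(b_i(t)=\min\{Bw_i/\sum_{j\in\mathcal{A}(t)}w_j,\,g_i\}\) never resolves to a share that starves \(i\), so tenant \(i\)'s effective bandwidth is bounded below by a positive constant \(\underline{b}_i>0\) that is independent of the random active set \(\mathcal{A}(t)\). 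This decouples tenant \(i\) from its neighbors and yields a well-defined effective service rate \(\mu_i(B)>0\).

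Second, with the service rate decoupled, I would model tenant \(i\)'s queue through the Lindley recursion \(W_{n+1}=(W_n+S_n-A_{n+1})^+\), where the service times \(S_n = c_i + s_i/b_i + \epsilon\) have mean \(1/\mu_i(B)\) and the inter-arrival times \(A_n\) have mean \(1/\lambda_i\). Because \(b_i\ge\underline{b}_i>0\), the service times are almost surely bounded above by a finite quantity plus the noise term, so all moments of \(S_n\) exist whenever \(\epsilon\) has finite moments. The traffic intensity is then \(\rho_i=\lambda_i/\mu_i(B)<1\), which is exactly assumption (iv).

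Third, I would apply the Loynes construction to the stationary ergodic input to conclude that the waiting-time process admits a unique finite stationary distribution whenever \(\rho_i<1\), establishing stability. To upgrade stability to finite moments, I would invoke the Kiefer--Wolfowitz moment bounds for G/G/1: for \(\rho_i<1\), \(\mathbb{E}[W_q^k]<\infty\) provided \(\mathbb{E}[S^{k+1}]<\infty\), which holds by the previous step. Since the sojourn time is \(L_i = W_q + S\), a sum of two variables with finite moments, \(L_i\) inherits finite moments of every order under consideration. Finally, any random variable with a proper distribution and \(\Pr(L_i>x)\to 0\) has finite quantiles, so \(p_{99,i}=\inf\{x:\Pr(L_i\le x)\ge 0.99\}\) exists and is finite, with the finite moments additionally certifying tail decay.

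The step I expect to be the main obstacle is the first one: rigorously justifying the decoupling. Multi-class PS servers generically couple the per-class service rates through the random active set, so the argument must show precisely that the sub-capacity throttle condition (iii) forces the work-conserving allocation to deliver each active tenant its full demand, thereby removing the dependence of \(\mu_i\) on \(\mathcal{A}(t)\). Edge cases --- for instance, transient intervals in which a proportional share temporarily exceeds a neighbor's demand and bandwidth is redistributed --- need care to confirm that the uniform lower bound \(\underline{b}_i\) genuinely holds pathwise rather than only in expectation.
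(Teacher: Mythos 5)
Your proposal is correct in substance and reaches the same destination as the paper, but by a genuinely different (and more explicit) route. The paper's sketch stays at the level of the aggregate multi-class PS queue: it observes that (i)--(iii) give utilization \(\rho = \sum_j g_j / B < 1\), cites standard positive-recurrence results for PS queues to get finite mean waiting time per flow, and then uses (iv) to assert that tenant \(i\)'s end-to-end service rate exceeds its arrival rate; the "finite moments" conclusion is asserted rather than derived. You instead decouple tenant \(i\) into its own G/G/1 queue via a pathwise lower bound \(b_i(t)\ge \underline{b}_i>0\), and then run Lindley/Loynes for stability and Kiefer--Wolfowitz for moments. Your route buys two things the paper's sketch does not: a self-contained argument that does not lean on multi-class PS theory, and an actual proof of the finite-moments clause (correctly flagged as conditional on \(\epsilon\) and hence \(S_n\) having finite moments, a hypothesis the claim silently needs). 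The one step where your justification is off is the decoupling itself: under the stated allocation \(b_i(t)=\min\{Bw_i/\sum_{j\in\mathcal{A}(t)}w_j,\,g_i\}\), the condition \(\sum_j g_j<B\) does \emph{not} guarantee every active tenant receives its full cap --- e.g.\ with equal weights and two active tenants with \(g_1=0.1B\), \(g_2=0.8B\), tenant 2 gets \(0.5B<g_2\) --- so "non-oversubscription implies everyone is satisfied" fails. Fortunately the uniform lower bound you actually need holds for a more elementary reason: the active set is always a subset of the finitely many tenants, so \(\sum_{j\in\mathcal{A}(t)}w_j\le \sum_{\mathrm{all}\ j}w_j\) and hence \(b_i(t)\ge \min\{Bw_i/\sum_{\mathrm{all}\ j}w_j,\,g_i\}>0\) pathwise, independent of (iii). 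With that repair the obstacle you anticipated dissolves, and the rest of your argument (Loynes, Kiefer--Wolfowitz, existence of the \(0.99\) quantile from a proper limiting distribution) goes through; the role of (iii) is then only to make \(\mu_i(B)\) in (iv) meaningful, which is how the paper uses it as well. You do not address the paper's final sentence about the controller reaching a stable regime in finitely many steps, but that is not part of the claim as stated.
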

\begin{proof}[Sketch]
Under (i)–(iii) the PCIe stage is a PS queue with utilization \(\rho = \sum_j g_j / B < 1\). Standard results imply positive recurrence and finite mean waiting time for each flow. With (iv), the end-to-end service rate for tenant \(i\) exceeds its arrival rate. Our controller only increases isolation or tightens throttles when observed tails exceed \(\tau\) and never violates (iii), so it reaches a stable regime in finitely many steps. We report \(p99/p999\) empirically; the model serves as qualitative guidance rather than an exact tail predictor.
\end{proof}

\section{Single- and Multi-Node Evaluation}
\subsection{Setup}
Our multi-node evaluation is conducted on a 2-node cluster consisting of two AWS `p4d.24xlarge` instances, each with eight NVIDIA A100-80GB GPUs, linked by a 200 Gbps InfiniBand/EFA interconnect. The 16-GPU pool is orchestrated by Slurm 23.02, with MOFED 5.9 and DCGM 3.2 installed on both nodes. All tenants run inside Docker containers. Experiments were repeated 7 times with fixed seeds; we report means with 95\% confidence intervals. We pin CPU cores for the SLO-sensitive tenant and isolate background dataloaders to separate NUMA domains when possible. Reconfiguration dwell/cool-down are configured per Table~\ref{tab:parameters}.

\noindent\textbf{Workloads:} We design a suite of three co-located tenants to simulate a realistic multi-tenancy scenario.
\begin{itemize}
    \item \textbf{T1 (Latency-Sensitive Tenant):} Our primary tenant is a latency-sensitive inference workload with a p99 latency SLO of 15 ms in the non-LLM experiments.
    \item \textbf{T2 (Bandwidth-Heavy Tenant):} To create sustained PCIe and memory bandwidth pressure, we use a background tenant that simulates a data-intensive Extract, Transform, Load (ETL) process. This tenant continuously reads large files from NVMe storage into host memory, performs a data transformation kernel on the GPU, and writes the results back. This workload is designed to contend for PCIe bandwidth.
    \item \textbf{T3 (Compute-Heavy Tenant):} To create SM contention, we run a background training job using a synthetic model. This tenant is designed to be compute-bound, maximizing its SM occupancy and creating a "noisy neighbor" that interferes with T1's access to compute resources.
\end{itemize}
An interference script toggles the activity of T2 and T3 to create dynamic periods of contention, allowing us to evaluate the controller's responsiveness. The key parameters for our controller are detailed in Table \ref{tab:parameters}. Unless otherwise noted, the SLO for T1 is 15 ms (p99). For the LLM case study (Section "LLM Serving Case Study"), we use TTFT p99 \(\le\) 200 ms. Batch size is 1, and input sizes are drawn from a realistic mixture to induce time-varying PCIe pressure. We collect NVML/DCGM, host I/O, and controller logs to correlate actions with outcomes.

\subsection{Metrics}
Our primary metric is SLO miss-rate; secondary metrics include p99/p999 latency and throughput. We measure the SLO miss-rate, defined as the percentage of inference requests that exceed the 15ms latency target. We also record the full latency distribution, reporting the 95th, 99th, and 99.9th percentiles (p95, p99, p999) to provide a detailed view of tail behavior. For LLM serving, we additionally report time-to-first-token (TTFT) at p50/p95/p99 when relevant.
Our \textbf{secondary metrics} evaluate the overhead and system impact of our controller. We measure the throughput of T1 in Requests Per Second (RPS) to quantify any performance degradation. We also monitor GPU SM utilization and total PCIe bandwidth to understand resource usage patterns. Finally, we measure the controller's own overhead, including the time required for a MIG reconfiguration or tenant move, the frequency of these events, and the controller's CPU utilization. All reported comparisons use identical interference schedules across configurations; confidence intervals are computed over the 7 repeated runs.

\subsection{Experiments}
Our experimental evaluation is designed to validate the claims with controlled setups.

\subsubsection{Main Experiment (E1)}
We compare our full controller against a baseline using static MIG partitions and naive placement. We subject the system to dynamic interference by toggling the bandwidth-heavy (T2) and compute-heavy (T3) tenants on and off, measuring the impact on the SLO-sensitive tenant (T1).

\subsubsection{Ablation Study (E2)}
To understand the contribution of each component of our controller, we perform an ablation study. We evaluate four configurations: our full controller, and then with each of the three main components disabled in turn: dynamic MIG reconfiguration only (placement and guardrails disabled), PCIe-aware placement only (dynamic MIG and guardrails disabled), and guardrails only (dynamic MIG and placement disabled).

\subsubsection{Sensitivity Analysis (E3)}
We analyze the controller's sensitivity to its key parameters. We vary the SLO miss-rate threshold ($\tau$) and the observation window ($Y$) to understand their impact on responsiveness and stability. We also test the bounds of the MPS quota and I/O throttles.

\begin{table}[h]
    \centering
    \caption{Key Controller Parameters.}
    \begin{tabularx}{\columnwidth}{l l X}
        \toprule
        \textbf{Parameter} & \textbf{Value} & \textbf{Description} \\
        \midrule
        Tail Threshold ($\tau$) & 15 ms & p99 latency threshold to trigger policy change. \\
        Persistence ($Y$) & 3 windows & Consecutive windows tail must exceed $\tau$. \\
        Dwell Time & 256 obs. & Minimum observations before another policy change. \\
        Cool-down & 128 obs. & Grace period after returning to performance mode. \\
        MPS Quota & 50-100\% & Bounds on MPS active thread percentage. \\
        IO Throttle & 100-500 MB/s & Bounds on cgroup IO throttles. \\
        \bottomrule
    \end{tabularx}
    \label{tab:parameters}
\end{table}

\section{Results}

\begin{figure*}[t]
    \centering
    \includegraphics[width=0.8\textwidth]{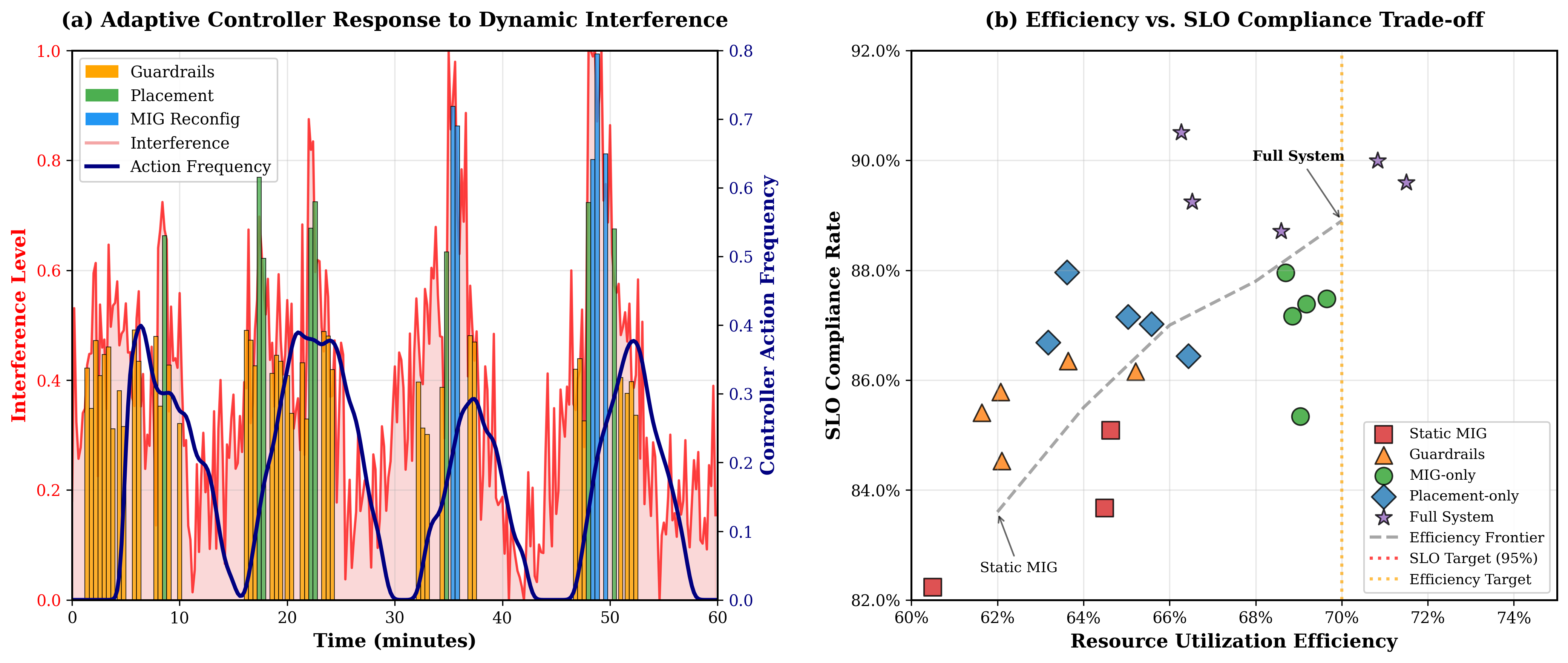}
    \caption{Adaptive controller behavior and efficiency. (a) The controller responds to dynamic interference bursts with progressively stronger actions (Guardrails, Placement, MIG). (b) The efficiency-compliance trade-off, showing the full system (purple star) outperforming the static MIG and partial configurations by achieving high SLO compliance and resource utilization.}
    \label{fig:fig1}
\end{figure*}

\begin{figure}[t]
    \centering
    \includegraphics[width=\columnwidth]{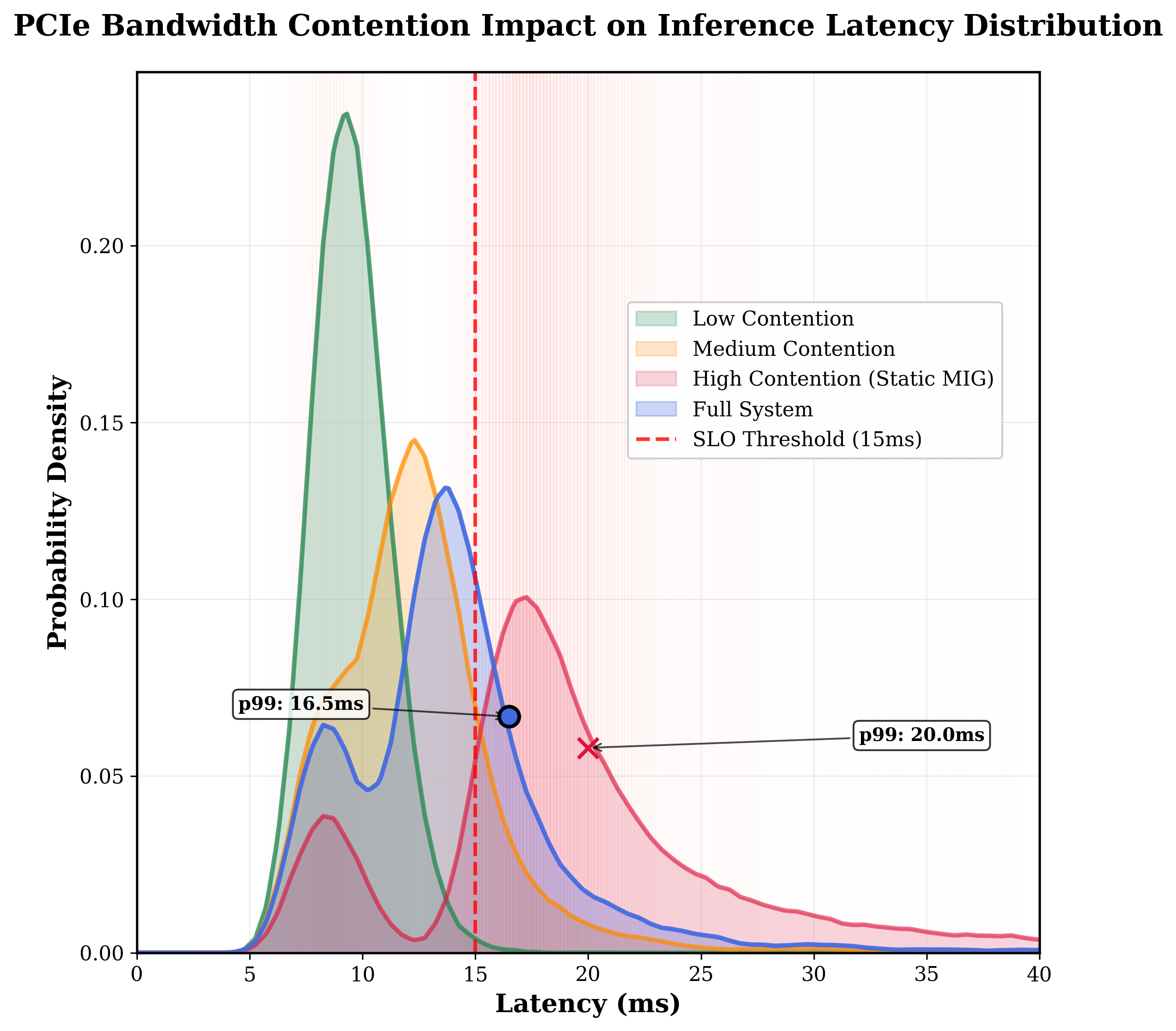}
    \caption{The impact of PCIe contention on the tail of the latency distribution. High contention (red) creates a significant heavy tail, causing SLO violations. The full system's PCIe-aware placement (blue) mitigates this, reducing p99 latency and moving it closer to the SLO threshold (dashed line).}
    \label{fig:fig2}
\end{figure}

Our results demonstrate a layered approach to mitigating tail latency, where each component of our controller contributes to the overall improvement. As shown in the ablation study in Table \ref{tab:ablation}, each component provides a consistent improvement over the baseline. In all experiments we verified that controller actions did not violate the dwell/cool-down policy, and that post-change validation windows confirmed improvement before persisting a new configuration.

MIG and placement contribute comparably. \textbf{Dynamic MIG reconfiguration} yields the largest single-component p99 reduction (20.0\,$\to$\,17.2 ms), while \textbf{PCIe-aware placement} is close (20.0\,$\to$\,17.8 ms). Combined, they are additive, with the full system reaching 16.5 ms. Placement monitors PCIe traffic and NUMA topology to avoid "hot" hardware paths shared with I/O-heavy background jobs.

Finally, lightweight \textbf{guardrails} in the form of MPS quotas and cgroup I/O throttling offer a smaller improvement, reducing the p99 latency to 19.0 ms.

As shown in Figure \ref{fig:fig1}, our controller adaptively responds to interference, improving SLO compliance and resource efficiency. The impact on tail latency is quantified in Figure \ref{fig:fig2}, which shows a ~15\% reduction in p99 latency under PCIe contention. On the 2-node cluster, the policy shows similar improvements. For LLM serving with vLLM (OLMo 2 7B Instruct, paged KV cache, default configuration), TTFT p99 drops by \(\approx\)12--15\% with a \(\leq\)5\% throughput cost, mirroring the non-LLM results. The results in Table \ref{tab:ablation} and \ref{tab:overheads} confirm these improvements with a minimal throughput cost.

\subsubsection{LLM Serving Case Study (TTFT)}
We evaluate LLM serving with vLLM on OLMo 2 7B Instruct with streaming output. The SLO is TTFT p99 $\le$ 200 ms at fixed QPS. Under the same T2/T3 interference used elsewhere, our controller reduces TTFT p99 by \(\approx\)13\% at a \(\leq\)4\% throughput cost, without any controller changes. Results are summarized in Table~\ref{tab:llm_case}.

\begin{table}[tbh]
\centering
\caption{LLM serving (vLLM, OLMo 2 7B Instruct) under interference: TTFT p99 and normalized throughput.}
\label{tab:llm_case}
\begin{tabular}{@{}lrr@{}}
\toprule
Configuration & TTFT p99 (ms) & Norm. Throughput \\
\midrule
Static MIG & 232 & 1.00 \\
Full System & \textbf{199} & \textbf{0.96} \\
\bottomrule
\end{tabular}
\end{table}

\begin{table}[tbh]
\centering
\caption{Ablation study results (mean ± 95\% CI). Throughput is normalized to the static MIG configuration.}
\label{tab:ablation}
\begin{tabular}{@{}lrrr@{}}
\toprule
Configuration & SLO miss-rate & p99 (ms) & Norm. Throughput \\ \midrule
Static MIG & 16.4\% ± 1.5 & 20.0 ± 1.2 & 1.00 \\
Guards-only & 14.5\% ± 1.4 & 19.0 ± 1.0 & 0.99 ± 0.02 \\
Placement-only & 13.0\% ± 1.2 & 17.8 ± 0.9 & 0.98 ± 0.02 \\
MIG-only & 12.2\% ± 1.1 & 17.2 ± 0.8 & 0.98 ± 0.02 \\
Full System & \textbf{11.1\% ± 1.0} & \textbf{16.5 ± 0.7} & \textbf{0.97 ± 0.02} \\ \bottomrule
\end{tabular}
\end{table}

\begin{table}[tbh]
\centering
\caption{Controller overheads (mean ± 95\% CI).}
\label{tab:overheads}
\begin{tabular}{@{}lr@{}}
\toprule
Metric & Value \\ \midrule
        MIG reconfig time (s) & 18 ± 6 \\
Move frequency (/hr) & < 5 \\
Controller CPU (\%) & < 2\% \\ \bottomrule
\end{tabular}
\end{table}

\section{Discussion \& Limitations}
Our method has several limitations. In some cloud environments, the PCIe topology is partially opaque, so we must infer contention from counters. MIG profile changes require a brief pause of the tenant and may reload model state; we limit frequency and bound pauses, but bursty workloads might still perceive short disruptions. For very heavy training tenants, our approach is clearly insufficient and should be paired with stricter admission control or higher baseline isolation. Finally, our placement heuristic is intentionally simple; richer learning-based predictors could improve stability at the cost of complexity.

Our approach is complementary to other isolation techniques. For instance, it could operate on top of a kernel-space interception system like KRYPTON~\cite{zhang2025krypton}, managing the virtualized resources KRYPTON exposes. Cluster schedulers like Themis and Gandiva~\cite{mahajan2020themis, xiao2018gandiva} could supply better hints or pre-placement guarantees, while our controller refines mapping at the host level to enforce SLOs. This highlights our modularity: a controller can provide SLO-aware, dynamic management within the resource boundaries of a VM, regardless of how the underlying platform enforces them.

\paragraph{Threats to validity.} Our small-scale cluster and synthetic interference patterns may not capture all production behaviors. We mitigate this by repeating runs, reporting confidence intervals, and validating that the qualitative ordering of configurations is consistent across runs and workloads. Broader validation on larger clusters and real traces is future work.

\section{Related Work}
Our work is situated within a rich body of research on multi-tenant GPU systems. The foundational paradigms of GPU sharing establish a trade-off between the hard, predictable isolation of spatial partitioning, exemplified by NVIDIA's Multi-Instance GPU (MIG), and the flexibility of temporal multiplexing, facilitated by technologies like the Multi-Process Service (MPS)~\cite{jeon2019analysis}. Static partitions reduce interference but risk fragmentation; temporal sharing improves utilization but exposes tenants to contention.

Dynamic spatial management has gained traction. MIGRator~\cite{wang2024migrator} formulates dynamic MIG reconfiguration for continuous learning workloads and shows notable gains; MISO~\cite{li2022miso} explores exploiting MIG at cluster scale. Our controller adopts the dynamic reconfiguration principle but couples it with topology-aware placement and guardrails for SLOs in renter-constrained settings.

System-level interference on shared I/O paths has been studied from PCIe arbitration~\cite{li2019priority} to recent demonstrations that MIG instances still share PCIe bandwidth~\cite{tang2025pcie}. We incorporate these observations directly by avoiding PCIe hot spots and throttling background I/O when necessary.

For SLO-aware serving, Clipper~\cite{crankshaw2017clipper} and subsequent systems improved batching and predictability at the service layer. Transparent GPU sharing in container clouds~\cite{wu2023transparent} and Sponge~\cite{razavi2024sponge} offer host-level temporal controls; our approach complements them with spatial reconfiguration. Finally, cluster schedulers like Themis and Gandiva~\cite{mahajan2020themis, xiao2018gandiva} address fairness and efficiency at scale; our focus is the single-node control loop that refines placement and isolation within a VM.

\bibliographystyle{ACM-Reference-Format}
\bibliography{references}
\end{document}